\newcommand{\keywords}[1]{\par\addvspace\baselineskip
\noindent\keywordname\enspace\ignorespaces#1}
\begin{document}

\setlength{\algoheightrule}{0.8pt}

\setlength{\intextsep}{1em}

\mainmatter  

\title{QDR-Tree: An Efficient Index Scheme for Complex Spatial Keyword Query}

\author{Xinshi Zang\and Peiwen Hao\and Xiaofeng Gao\inst{(}\Envelope\inst{)}\and Bin Yao \and Guihai Chen}
\authorrunning{X. Zang et al.}

\institute{Department of Computer Science and Engineering, \\ Shanghai Jiao Tong University, China\\ \mailsa}

\maketitle

\begin{abstract}
	
	With the popularity of mobile devices and the development of geo-positioning technology, location-based services (LBS) attract much attention and top-$k$ spatial keyword queries become increasingly complex.
	It is common to see that clients issue a query to find a restaurant serving pizza and steak, low in price and noise level particularly.
	However, most of prior works focused only on the spatial keyword while ignoring these independent numerical attributes.
	
	\qquad In this paper we demonstrate, for the first time, the \emph{Attributes-Aware Spatial Keyword Query} (ASKQ), and devise a two-layer hybrid index structure called \emph{Quad-cluster Dual-filtering R-Tree} (QDR-Tree). In the keyword cluster layer, a Quad-Cluster Tree (QC-Tree) is built based on the hierarchical clustering algorithm using kernel $k$-means to classify keywords.
	In the spatial layer, for each leaf node of the QC-Tree, we attach a Dual-Filtering R-Tree (DR-Tree) with two filtering algorithms, namely, keyword bitmap-based and attributes skyline-based filtering. Accordingly, efficient query processing algorithms are proposed.
	
	\qquad Through theoretical analysis, we have verified the optimization both in processing time and space consumption. Finally, massive experiments with real-data demonstrate the efficiency and effectiveness of QDR-Tree.
	
	
	\keywords{Top-$k$ Spatial Keyword Query, Skyline Algorithm, Keyword Cluster,
		Location-based Service}
\end{abstract}

\section{Introduction}\label{sec-intr}
With the growing popularity of mobile devices and the advance in geo-positioning technology, location-based services (LBS) are widely used and spatial keyword query becomes increasingly complex.
Clients may have special requests on numerical attributes, such as price, in addition to the location and keywords.

\begin{example} Consider some spatial objects in Fig.~\ref{fig:basic_exam:a}, where dots represent spatial objects such as restaurants, whose keywords and three numerical attributes are listed in Fig. \ref{fig:basic_exam:b}. Dots with the same color own similar keywords, e.g., red dots share keywords about food. The triangle represents a user issuing a query to find a nearest restaurant serving pizza and steak with low level in price, noise, and congestion. At a first glance, $o_8$ seems to be the best choice for the close range, while $o_1$ surpasses $o_8$ in the numerical attributes obviously.
	This common situation shows that such complex queries deserve careful treatment.
	
	\label{exa:1}
\end{example}

\begin{figure}
	\centering
	\subfigure[Geo-position of spatial objects]{
		\label{fig:basic_exam:a} 
		\includegraphics[width=1.9in]{./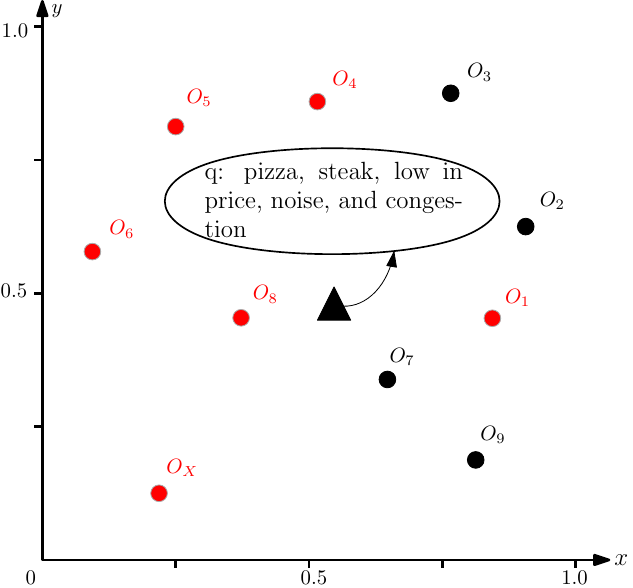}}
	\hspace{0.2in}
	\subfigure[Keywords and attributes table]{
		\label{fig:basic_exam:b} 
		\includegraphics[width=2.3in]{./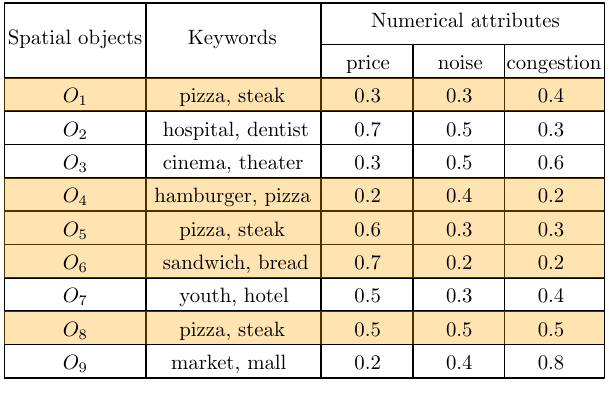}}
	\caption{A set of spatial objects and a query}
	\label{fig:basic_exam} 
\end{figure}

Extensive efforts have been made to support spatial keyword query.
However, prior works \cite{de2008keyword,li2011ir,ray2016dynamically} mainly
focused on the keywords of spatial objects but neglected or failed to distinguish independent numerical attributes.
Recently, Sasaki Y. \cite{sasaki2014sky} schemed out SKY R-Tree which incorporates R-tree with skyline algorithm to deal with the numerical attributes.
However, it does not work well for multi keywords, which reduces their usage for various applications.
Liu X.~\cite{6940236} proposed a hybrid index structure called Inverted R-tree with Synopses tree (IRS), which can search many different types of numerical attributes simultaneously.
However, the IRS-based search algorithm requires  providing exact ranges of attributes which is a heavy and unnecessary burden to the users.
What's more, the exact match in in attributes can also lead to  few or no query results to be returned.


Correspondingly, in this paper, we named and studied, for the first time,
the \emph{attributes-aware spatial keyword query} (ASKQ). This complex query needs to take
location proximity, keywords' similarity, and the value of numerical attributes into
consideration, that is respectively, the Euclidean spatial distance, the relevance of different keywords,
and the integrated attributes of users' preference.
Obviously the ASKQ has wide apps in the real world.

Tackling with the ASKQ in Example \ref{exa:1}, common search algorithms \cite{de2008keyword,li2011ir,ray2016dynamically}
ignoring numerical attributes may retrieve finally $o_1, o_5, o_8$ indiscriminately,
and SKY R-Tree-based algorithm may return $o_4$ as one of results, and IRS-Tree-based algorithm may retrieve no objects when the query predicate is set as ``$price < 0.3$ \& $noise<0.3$ \& $congestion<0.4$''. Apparently, none of these algorithms can satisfy the users' need.
These gaps motivate us to investigate new approaches that can deal with the ASKQ efficiently.

In this paper, we propose a novel two-layer index structure called \emph{Quad-cluster Dual-filter  R-Tree} (QDR-Tree) with query processing algorithms. In the first layer we deal with keyword specifically. Considering numbers of keywords share the similar semanteme and clients tend to query objects of the same class, we cluster and store the keywords in a Quad-Cluster Tree (QC-Tree) by hierarchical clustering algorithm using kernel $k$-means clustering~\cite{Dhillon2004Kernel}. With keyword relaxation operation and Cut-line theorem to avoid redundance, QC-Tree can balance search time and space cost well.

In the second layer we deal with spatial objects with numerical attributes. At each leaf node of the first layer, a Dual-filter R-Tree (DR-Tree) is attached according to two filtering algorithms, namely, keyword bitmap-based filtering and attributes skyline-based filtering, which effectively reduce the false positives. 


Moreover, we also propose a novel method to measure the relevance of one spatial object with the query keywords. We measure the similarity of different keywords from both textual and semantic aspects. For the latter one, the \emph{term vectors} that are obtained by word2vec
\cite{mikolov2013efficient} are applied to represent every keywords, and therefore, the similarity can be quantified.
Note that both queries and spatial objects usually own several keywords,
a bitmap of keywords is used to measure the relevance between two lists of keywords lightly and efficiently.


Table~\ref{tab:1} compares the current index with QDR-Tree in three aspects.
Apparently, QDR-Tree outperform existing methods in tackling with the ASKQ,
and can achieve great improvements in query processing time and space consumption.
This  will be demonstrated in both theoretical and experimental analysis.
Massive experiments with real-data also confirm the efficiency of QDR-Tree.

\begin{table}
	\centering
	\caption{Comparisons among current indexes and QDR-Tree}
	\label{tab:1}
	\begin{tabular}{ccccc}
		\hline
		Index& From & location proximity & muti-keywords& fuzzy attributes \\
		\hline
		IR-Tree & TKDE(2011)\cite{li2011ir}& \Checkmark & \Checkmark& \XSolidBrush \\
		
		IL-Quadtree& ICDE(2013)\cite{Zhang2013Inverted}&\Checkmark & \Checkmark& \XSolidBrush \\
		
		SKY R-Tree & DASFAA(2014)\cite{sasaki2014sky}&\Checkmark & \XSolidBrush &\Checkmark\\
		
		IRS-Tree& TKDE(2015) \cite{6940236}&\Checkmark & \Checkmark& \XSolidBrush\\
		
		QDR-Tree & DEXA(2018) &\Checkmark &\Checkmark &\Checkmark \\
		\hline
	\end{tabular}
\end{table}

To sum up, the main contributions of this paper are summarized as follows:
\begin{itemize}
	\item We formulate the attributes-aware spatial keyword query, which takes spatial proximity, keywords' similarity and numerical attributes into consideration.
	
	\item We design a novel hybrid index structure, i.e., QDR-Tree which incorporates Quad-Cluster Tree with Dual-filtering R-Trees and accordingly propose the query processing algorithm to tackle the ASKQ.
	\item We propose a novel method to measure the relevance of one spatial objects with query keywords based on word2vec and bitmap of keyword.
	\item We conduct an empirical study that demonstrates the efficiency of our algorithms and index structures for processing the ASKQ on real-world datasets.
\end{itemize}

The rest of the paper is organized as follows. Section \ref{related} reviews the related works.
Section \ref{formulation} formulates the problem of ASKQ.
Section \ref{SCIR} presents the QDR-Tree. Section \ref{enhanced} introduces the query processing algorithm based on the QDR-Tree.
Three baseline algorithms are proposed in Section \ref{performance} and
considerable experimental results are reported. Finally,  Section \ref{conclusion} concludes the paper.

\section{Related Work} \label{related}
Existing works concerning the ASKQ include spatial keyword search, keyword relevance measurement, and the skyline operator.

\textbf{Spatial keyword search.}
There are many studies on spatial keyword search recently \cite{de2008keyword,Zhang2013Inverted,tao2014fast}. Most of them focus on integrating inverted index
and R-tree to support spatial keyword search. For example,
IR2-tree\cite{de2008keyword} combines R-trees with signature files. It preserves
objects spatial proximity, which is the key to solve spatial
queries efficiently, and can filter a considerable portion
of the objects that do not contain all the query keywords.
Thus it significantly reduces the number of objects to be examinated.
SI-index \cite{Zhang2013Inverted} overcomes IR2-trees' drawbacks and 
outperform IR2-tree in query response time significantly. \cite{tao2014fast}
proposes inverted linear quadtree, which is carefully designed to exploit both
spatial and keyword-based pruning techniques to effectively
reduce the search space.

\textbf{Keyword relevance measurement.}
The traditional measurement on keyword relevance includes textual and semantic relevance. The textual relevance can be computed using an information retrieval model \cite{Cong2009Efficient,Cong2009337,zobel2006inverted,Cao2010Retrieving}. They are all TF-IDF variants essentially
sharing the same fundamental principles.
The semantic relevance is measured by many methods.  \cite{Qian2016On, Qian2017Semantic} apply the Latent Dirichlet Allocation (LDA) model to calculate the topic distance of keywords. Gao Y.\cite{Lu2017Efficient} proposed an efficient disk-based metric access method which achieves excellent performance in the measurement of keywords' similarity.

\textbf{The skyline operator.}
The skyline operator deals with the optimization problem of selecting multi-dimension points.
A skyline query returns a set of points that are not dominated by any other
points, called a skyline. It is said that a point $o_i$ dominates
another point $o_j$ if $o_i$ is no worse than $o_j$ in all dimensions of attributes and
is better than $o_j$ at least in one dimension.
Borzsonyi  et al. \cite{borzsony2001the} first introduced the skyline operator into relational database
systems and introduced three algorithms.
Geng et al.\cite{Ma2012A} propose a method which combines the spatial information with non-spatial information to obtain skyline results.
Lee \cite{lee2014toward} et al. focused on two methods
about multi-dimensional subspace skyline computation and developed orthogonal optimization principles.

\section{Problem Statement} \label{formulation}
Given an geo-object dataset $O$ in which each object $o$ is denoted as
a tuple $\langle$$\lambda,$ K, A$\rangle$, where $o.\lambda$ is a location descriptor which we assume is at a two dimensional geographical space and is composed of latitude and longitude, $o.K$ is the set of keywords, and $o.A$ represents
the set of numerical attributes. Without loss of generality, we assume the attributes $o.a_i$ in $o.A$ are numeric attributes and normalize each $o.a_i\in [0, 1]$.
We assume that smaller values of these numercial attributes, e.g., price and noise, are preferable.
As for other numerical attributes' values which are better if higher, such as the rating and health score, we convert them decreasingly as $o.a_i = 1 - o.a_i$.
The query $q$ is represented as a tuple $\langle \lambda,K,W\rangle$, where $q.\lambda$ and
$q.K$ represent the location of the user and the required keywords respectively, and $q.W$ represents the set of weight for different numerical attributes and user's different preference on these attributes. $\forall q.w_i \in q.W, q.w_i \geq 0$ $(i = 1,\dots, |q.W|)$ and $\sum_{i=1}^{|q.W|} q.w_i = 1$.
The reason for assigning weight to each  attribute instead of qualifying exact range of attributes is to prepare for the fuzzy query on numerical attributes.
In order to elaborate  the QDR-Tree , we firstly define the keyword distance and the keyword cluster as follows.

\begin{definition}[Keyword Distance]\label{def:1}
	Given two keywords $k_1, k_2$,
	their keyword distance, denoted as $d(k_1,k_2)$, includes both textual distance and semantic distance.
	The textual similarity between two keywords is denoted as $d_t(k_1, k_2)$ which is measured by the Edit Distance.
	The semantic distance between two keywords denoted as $d_s$ is measured by the Euclidean distance of the term vector generated by word2vec.
	With a parameter $\delta(\in [0,1])$ controlling their relative weights, Eqn.~\eqref{eq:key} describes the formulation of $d(k_1, k_2)$.
	\begin{equation} \label{eq:key}
	d(k_1, k_2) = \delta d_t(k_1, k_2) + (1-\delta)d_s(k_1, k_2)
	\end{equation}
\end{definition}


\begin{definition}[Keyword Cluster]\label{def:2}
	A keyword cluster ($C_i$) is formed by similar keywords. The cluster diameter is defined as the maximum keyword distance within the cluster.
	One keyword can be allocated into the cluster if the diameter after adding it does not exceed the threshold $\tau$, i.e. $\forall k_i, k_j \in C_i, d(k_i, k_j) < \tau$. Each cluster has a center object denoted as $C_i.cen$. All the keyword clusters ($C_i$) make up the set of keyword clusters ($\mathbb{C}$).
\end{definition}

%

%
%

\begin{definition}[Attributes-Aware Spatial Keyword Query] \label{def3}
	Given a geo-object set $O$ and the attributes-aware spatial keyword query q, the result includes a set of $Top_\kappa(q)$,\footnote{Hereafter, Top-$k$ is denoted as Top-$\kappa$ to avoid confusion with the $k$-means algorithm.} $Top_\kappa(q) \subset O$, $|Top_\kappa(q)| = \kappa$ and $\forall o_i, o_j: o_i \in Top_\kappa(q), o_j \in O - Top_\kappa(q)$, it holds that $score(q, o_i) \leq score(q, o_j)$.
	
\end{definition}
As for the evaluation function, $score(q, o)$ in Def. \ref{def3}, it is composed of three aspects, including the location proximity, the keywords similarity, and the value of numerical attributes, and will be discussed at large in the Sec. \ref{enhanced}.
%
%
%

\section{QDR-Tree}\label{SCIR}
In this section, we introduce a new hybrid index structure QDR-Tree, which is a new indexing framework for efficiently processing the ASKQ. The QDR-Tree can be divided into two layers,  the keyword cluster layer and the spatial layer where the QDR-Tree can be split up into two sub-trees, named as Quad-Cluster Tree (QC-Tree) and Dual-filtering R-tree (DR-Tree) respectively.

\subsection{Keyword cluster layer}

The keyword cluster layer deals with keyword search with both textual and 
semantic similarities. Neither appending an R-Tree to each keyword with a huge 
space redundancy, nor just clustering all keywords into $k$ groups with a high false positive
ratio during query search, QC-Tree smartly splits keyword set into hierarchical
levels and link them by a Quad-Tree. 

To improve the searching efficiency, we propose a new hierarchical quad clustering algorithm based on the \emph{kernel $k$-means}~\cite{Dhillon2004Kernel}.
Compared with the traditional $k$-means clustering, kernel $k$-means will have better clustering effect even the samples do not obey the normal distribution and is more suitable to cluster the keywords.
Moreover, different from the common clustering, hierarchical clustering can form a meaningful relationship between different clusters, which is helpful to allocate a new sample and decrease the cost of misallocation.
After the clustering process finishs, a quad-cluster tree (QC-Tree) is used to arrange all of these clusters, which is the core composition of the keyword cluster layer. In Alg.~\ref{alg:cluster}, the critical part is applying the kernel $k$-means to each keyword cluster per level, with $k$ fixed as $4$. 
Furthermore, when the diameter of the keyword cluster is smaller than the $\tau_{cluster}$, the duplication operation is executed, which is presented in Alg. \ref{alg:duplication} and will be discussed later.

\begin{algorithm}[H]
	\SetKwFunction{Duplication}{Duplication}
	\SetKwFunction{KERNEL}{Kernel$k$Means}
	\caption{Hierarchical quad clustering algorithm}
	\KwIn{keyword set $K$, cluster number $k$}
	\KwOut{Quad-Cluster Tree: $T_{qc}$}
	\label{alg:cluster}
	$T_{qc}$.add($K$)\\
	Insert $K$ into a priority queue $U$ {\color{ForestGreen}\tcc*[f]{instert as a set}}\\
	\While {U $\neq$ $\emptyset$}
	{
		$S$ $\leftarrow$ $U$.Pop() {\color{ForestGreen}\tcc*[f]{pop the whole set }}\\
		\{$S_1$,$S_2$,$S_3$,$S_4$\} $\leftarrow$ \KERNEL($k$, $S$) {\color{ForestGreen}\tcc*[f]{$k$=4 by default}}\\
		{
			\ForEach{$S_i$ $\in$ \{$S_1, S_2,S_3, S_4$\}}
			{
				\eIf{$S_i$.diameter $ < $ $\tau_{cluster}$}
				{
					\Duplication($S_1$,$S_2$,$S_3$,$S_4$)
				}
				{
					insert  $S_i$ in to $U$
					
					$T_{qc}$.add($S_i$){\color{ForestGreen}\tcc*[f]{$S_i$ are children of $S$ }}
					
				}
			}
		}
		
		
	}
\end{algorithm}

\begin{algorithm}[H]
	\caption{Duplication}
	\KwIn{Four keyword sets: $S_1$,$S_2$,$S_3$,$S_4$ }
	\KwOut{Duplicated keyword sets: $S_1'$,$S_2'$,$S_3'$,$S_4'$}
	\label{alg:duplication}
	\For{$\forall k_i \in$ $S_1 \bigcup S_2 \bigcup S_3 \bigcup S_4$ }
	{
		\If({\tcc*[f]{\color{ForestGreen}Variance}}){$\sigma( d(k_i, S_j.cen$)) < $\tau_{dup}$  }
		{
			$S_j' \leftarrow$ $k_i \bigcup$  $S_j$, if $k_i \not\in S_j$ with $j \in \{1,2,3,4\}$
		}
	}
	\{$S_1, S_2, S_3, S_4$\} $\leftarrow$ \{$S_1', S_2', S_3', S_4' $ \}
\end{algorithm}

Fig. \ref{fig:hier:a} illustrates the hierarchical clustering in Alg.~\ref{alg:cluster}, where each dot represents a keyword and different aggregation of these dots presents different keyword clusters. The dots marked in different color are the centroid of these clusters, and moreover, same color denotes their clusters stay in the same level.

\begin{figure}
	\centering
	\subfigure[Distribution]{
		\label{fig:hier:a} 
		\includegraphics[width=1.27in]{./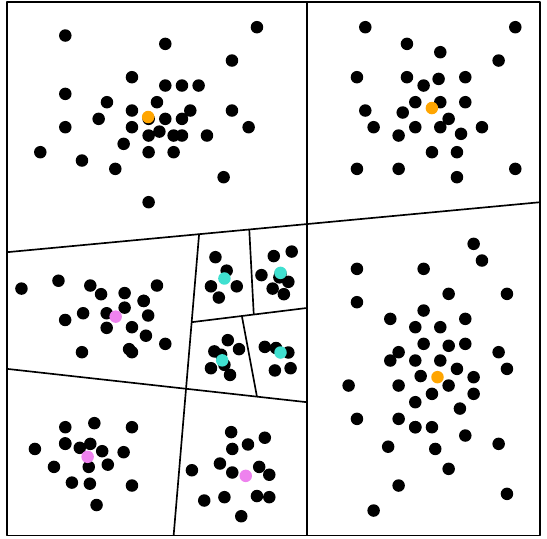}}
	\hspace{0.1in}
	\subfigure[Quad-cluster Tree]{
		\label{fig:hier:b} 
		\includegraphics[width=1.42in]{./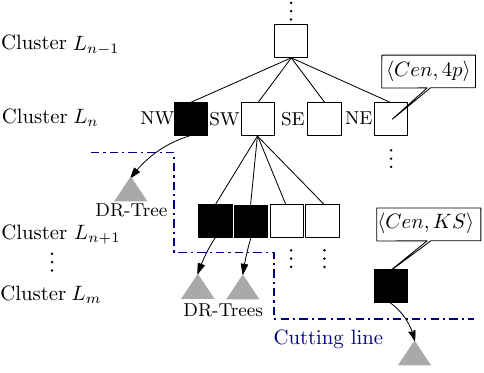}}
	\hspace{0.1in}
	\subfigure[Keyword relaxation]{
		\label{fig:hier:c} 
		\includegraphics[width=1.35in]{./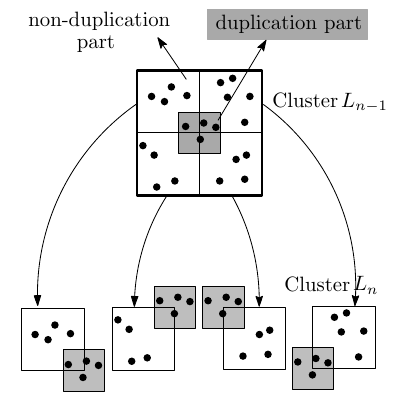}}
	\caption{Overview of the keyword cluster layer}
	\label{fig:hier} 
\end{figure}

Notice that, the main target of QC-Tree is to improve the pruning effect of keywords while making the future query keyword set located in only one keyword cluster.
As is shown in both Alg. \ref{alg:cluster} and Fig. \ref{fig:hier:a}, with the cluster level growing, the cluster will be more centralized and compact.
That means the possibility of  one query being allocated to different clusters increases layer by layer.
It is necessary to decide an optimal $\tau_{cluster}$ to terminate the hierarchical cluster proceeding, if not, there would  only be a single keyword in each cluster finally.
The basic structure of QC-Tree is displayed in Fig. \ref{fig:hier:b}, where each internal node keeps the centroid keyword ($cen$) and four pointers (4p) to its four descendants nodes,
and each leaf node will keep the keyword set in this cluster  and the pointer to a new DR-Tree. Additionally, a cut-line is drawn to emphasize the shift of index structure, which is mainly dependent on the value of $\tau_{cluster}$.

As is analyzed above, the leaf cluster is where a query would  most likely be scattered into different clusters. We will take a keyword-relaxation operation by duplicating some keywords among the four clusters sharing the same parent node.
In Fig. \ref{fig:hier:c}, for a keyword cluster, its keywords are grouped into four sub-clusters and the duplication operation need to be executed. The dots in the shadow represent the keywords that will be duplicated and allocated to all of these four sub-clusters because they are closed to all of the sub-clusters. Here, we introduce another threshold ($\tau_{dup}$) to decide whether to execute the duplication operation. 
Although this keyword-relaxation operator will cause redundancy of keywords and extra space consumption, it will largely improve the time efficiency, which will also be demonstrated in the experimental verification.

\subsection{Spatial layer}
Under each keyword cluster in the bottom of QC-Tree, we build a DR-Tree based on dual-filtering technique to organize the spatial objects in this cluster.

In Fig. \ref{fig:QDRTree}, a basic structure of DR-Tree  is shown in the spatial layer.
Each internal node $N$  records  a two-element tuple: $\langle SP, KB \rangle$. The first element $SP$ stands for the
skyline points of the numerical attributes of all objects in the subtree rooted
at the node. The  second element is a bitmap of the keywords included in this cluster, which uses 1 and 0 to denote the existence of keywords.

\subsubsection{Keyword bitmap filter algorithm:}\label{sct:kb2}
In the DR-Tree, each node just records the keyword bitmap, and then the specific keywords list is kept only in the leaf keyword cluster. Then, the keyword relevance can be calculated just by Bitwise AND within the pair of bitmaps, which can decrease the storage consumption and increase the query efficiency.

Because bitwise AND within bitmaps need an exact keywords matching, in order to support
similar keyword matching, we also implement the relaxation in each query process.
In Fig. \ref{fig:QDRTree}, as is highlighted in blue, the bitmap of query keywords
performs a search-relaxation by switching some 0-bits to 1-bits based on the keyword
similarity
The search-relaxation algorithm will be proposed in Alg. \ref{alg:search} in Section \ref{sct:kb2}.

\begin{figure}[h]
	\centering
	\includegraphics[width=0.78\textwidth]{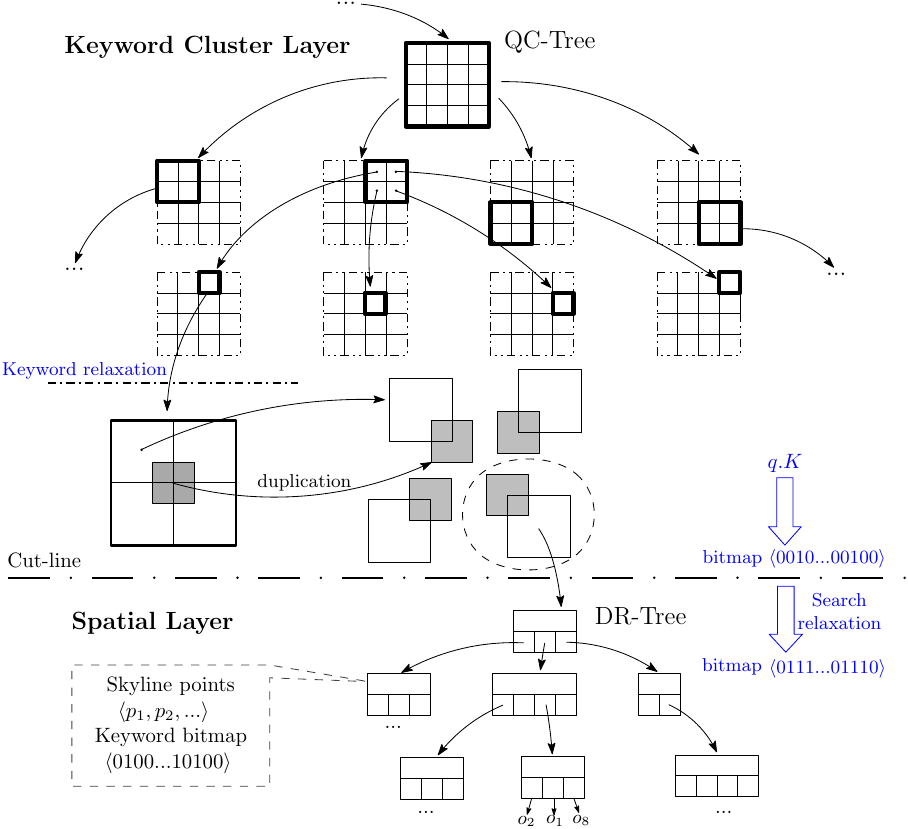}
	\caption{Structure of QDR-Tree}\label{fig:QDRTree}
\end{figure}

\subsubsection{Multidimensional subspace skyline filter algorithm: }
In order to satisfy the needs of user's
intention on multiple attributes,
a filter called Multidimensional Subspace Skyline
Filter, which is inspired by \cite{lee2014toward,borzsony2001the}, is employed to amortize the query false positive and the cost of computation.
We use the Evaluate() algorithm proposed in \cite{lee2014toward} to gain the multidimensional
skyline points efficiently, and then let every QC-Tree node record the skyline points of its
descendants.
Furthermore, in order to reduce the complexity of recording multidimensional skyline points, we
will take the point-compression operation by merging the closed skyline points
in the attributes space.
We calculate the cosine distance between skyline points' attributes
to measure the similarity, and then merge these closed points when cosine distance is larger than a threshold.

\section{QDR-based Query Algorithm} \label{enhanced}
In this section, we will introduce the ASKQ processing algorithms based on QDR-Tree. The process includes finding the Leaf Cluster, making search-relaxation and searching in the DR-Tree.

\textbf {Find the leaf cluster}
The leaf keyword cluster that is best-matched with $q$ can be obtained by iteratively comparing $q$ with the four sub-clusters in each cluster level.
If the combination of keywords in the query is typical and can be allocated into the same cluster, only one keyword cluster will be found. Otherwise, more than one keyword cluster may be returned.

\textbf {Search-relaxation }\label{sct:kb}
As is stated in Section \ref{sct:kb2}, by means of executing search-relaxation,
bitmap-based filter can support similar keyword matching. In Alg.
~\ref{alg:search}, a bitmap of relaxed query keyword is obtained by
switching 0-bit to 1-bit if their keyword distance is under a threshold.
By adopting a rational threshold, we can make a good trade-off between time cost and space occupation.

\begin{algorithm}[H]
	\caption{FindLeafCluster}
	\label{alg:find}
	\KwIn{$q$, QC-Tree $T_{qc}$}
	\KwOut{the leaf cluster: $LC$}
	$LC$ $\leftarrow$ $\emptyset$\\
	\ForEach {k $\in$ q.K}
	{
		$lc$ $\leftarrow$ $T_{qc}.root$ \\
		\While{lc is not leaf cluster}
		{
			$ls$ $\leftarrow$ $lc$.$sub_i$, with $d(k, lc.sub_i.cen)$ is minimum among  4 $lc.subs$\\
		}
		$LC$ $\leftarrow$ $LC \cup lc$ 
		
	}
\end{algorithm}

\begin{algorithm}[H]
	\caption{Search relaxation}
	\label{alg:search}
	\KwIn{bitmap of query keyword: $bmq$, bitmap of keyword cluster: $bmc$ }
	\KwOut{bitmap of relaxed query keyword: $bmr$}
	\For {i $\leftarrow$ 1 to |bmq|}
	{
		\If{$bmq[i]$ = 1}
		{$bmr[i]$ $\leftarrow $ 1\\
			{\For {j $\leftarrow$ 1 to |bmc|}
				{\If{d($k_i$,$k_j$) $<$ $\tau$}
					{$bmr[j]$ $\leftarrow $ 1\\}}}
	}}
\end{algorithm}

Alg. \ref{alg:qdr_search} illustrates the query processing mechanism over QDR-Tree.
Given a query $q$, the object retrieval is carried out firstly by traversing the QC-Tree to locate the best-matched keyword cluster.
Secondly, after executing search-relaxation, it will traverse the DR-Tree
in the ascending order of the scores and keep a minimum heap for the scores.
Notice that, if more than one keyword cluster is located, it will
traverse all of them.
At last the
Top-$\kappa$ results can be returned.

%

The ranking score of an object $o$ for ASKQ is calculated by Eqn.~\eqref{eq:score}. Here, $\alpha,\beta \in [0,1]$ are parameters indicating the relative
importance of these three factors. $\psi(q,o)$ is the Euclidian distance between
$q$ and $o$. The $D^{max}_s$ is the maximal spatial distance that the client will
accept.
$\phi(q,o)$ which represents the keyword relevance between $q$ and $o$ is determined by the result
of Bitwise AND between their keyword bitmaps.
The smaller the score, the higher the
relevance.



\begin{equation}
score(q,o) = \alpha\beta \times \frac{\psi(q,o)}{D^{max}_s}
+ (1-\beta)\times \frac{1}{\phi(q,o)}
+(1-\alpha)\beta \times \sum_{i=1}^{|q.W|} q.w_i \times o.a_i
\label{eq:score}
\end{equation}

What is more,
the score for non-leaf node $N$ can also been measured to represent the optimal score of its 
descendant nodes, which is defined as Eqn.~\eqref{eq:score2}

\begin{equation}
\begin{split}
score(q,N) = &\alpha\beta \times \frac{\min \psi(q,N.MBR)}{D^{max}_s}
+ (1-\beta)\times \frac{1}{\phi(q,N)} \\
& +(1-\alpha)\beta \times\min_{\forall p \in N.sp} \sum_{i=1}^{|q.W|} q.w_i \times p.a_i
\end{split}
\label{eq:score2}
\end{equation}

where the $\min \psi(q, N.MBR)$ represents the minimum Euclidian distance between the N's MBR and the $\phi(q, N)$ is can also be calculated by the bitmap of keywords kept in this node.
We can prove that Top$_\kappa(q)$ is an exact result by the  Theorem \ref{the:1}. If the score of the internal node dose not satisfy the ASKQ, there is no need to search its descendant nodes. Hence, the final Top-$\kappa$ objects will have the least $\kappa$ scores.

\begin{theorem}\label{the:1}
	The score of an internal node N is the best score of its descendant object $o$ to the query $q$.
\end{theorem}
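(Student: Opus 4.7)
The plan is to decompose the score function in Eqn.~\eqref{eq:score2} into its three additive components, prove that each component of $score(q, N)$ is a lower bound on the corresponding component of $score(q, o)$ for every descendant object $o$, and then add the three inequalities. Since all three weighting coefficients $\alpha\beta$, $(1-\beta)$, and $(1-\alpha)\beta$ are nonnegative, each term's direction is preserved under multiplication, so additivity gives $score(q, N) \leq score(q, o)$, which is exactly the best-score property that justifies branch-and-bound pruning in Alg.~\ref{alg:qdr_search}.

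First I would dispatch the spatial term by the standard R-tree MBR property: any descendant object $o$ has $o.\lambda \in N.MBR$, so $\psi(q, o) \geq \min \psi(q, N.MBR)$. Next I would handle the keyword term. By construction of the DR-Tree, the keyword bitmap stored at an internal node $N$ is the bitwise OR of the bitmaps of its children, hence a superset (as a bit pattern) of every descendant bitmap. Applying the same relaxed query bitmap $bmr$ produced by Alg.~\ref{alg:search} to both sides, the number of matched 1-bits at $N$ is at least that at $o$, i.e., $\phi(q, N) \geq \phi(q, o)$. Taking reciprocals reverses the inequality, so $\tfrac{1}{\phi(q,N)} \leq \tfrac{1}{\phi(q, o)}$.

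The main obstacle will be the numerical-attributes term, because this is where the skyline summarization has to interact correctly with a weighted linear aggregation. I would argue as follows: for any descendant object $o$, either $o$ itself belongs to $N.sp$, or (by the definition of a skyline) there exists $p \in N.sp$ that dominates $o$, meaning $p.a_i \leq o.a_i$ for every dimension $i$. Because every $q.w_i \geq 0$, dominance lifts to the weighted sum: $\sum_i q.w_i \cdot p.a_i \leq \sum_i q.w_i \cdot o.a_i$. Hence $\min_{p \in N.sp}\sum_i q.w_i \cdot p.a_i \leq \sum_i q.w_i \cdot o.a_i$. A subtlety I would flag here is the cosine-based point-compression step that merges nearby skyline points; the argument goes through cleanly only if the merged representative is chosen to be coordinate-wise no greater than each point it absorbs (otherwise the bound becomes an approximation rather than a true lower bound).

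Finally, summing the three component inequalities after multiplying by the respective nonnegative coefficients yields
\begin{equation*}
score(q, N) \;\leq\; score(q, o) \quad \text{for every descendant } o \text{ of } N,
\end{equation*}
so in particular $score(q, N)$ lower-bounds $\min_{o \text{ desc. of } N} score(q, o)$, establishing Theorem~\ref{the:1}. The corollary used implicitly by Alg.~\ref{alg:qdr_search} is immediate: if the minimum heap already contains $\kappa$ objects whose scores are all smaller than $score(q, N)$, then no descendant of $N$ can enter the Top-$\kappa$, so the subtree rooted at $N$ can be safely pruned.
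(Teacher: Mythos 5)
Your proof is correct and follows essentially the same route as the paper's: term-by-term lower bounds via the MBR containment, the bitmap superset (hence $\phi(q,N)\geq\phi(q,o)$ and reciprocal reversal), and skyline dominance lifted to the weighted sum by nonnegativity of the $q.w_i$, then summed with nonnegative coefficients. Your added caveat about the cosine-based point-compression step is a genuine subtlety the paper's proof silently ignores --- the bound is only exact if each merged representative is coordinate-wise dominated by none of the points it absorbs --- but this does not change the fact that the overall argument matches the paper's.
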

\begin{proof}
	the score factors in location proximity, keyword relevance and non-spatial attributes' value.
	First, the MBR of the N encloses all of its descendant objects, then $\forall o_i \in$ descendant objects of N, $\min \psi(q, N.MBR) \leq \psi(q, o_i)$. Second, the keyword bitmap includes all of the keywords existing in the descendant objects of $N$. Obviously, $\phi(q, N) \geq \phi(q, o)$.   Finally, the skyline points dominate or are equal to all of descendent objects concerning the value of attributes,
	i.e., $\min_{\forall p \in N.SP} \sum_{i=1}^{|q.W|} q.w_i \times p.a_i \leq \sum_{i=1}^{|q.W|} q.w_i \times o.a_i$. All these inequalities contribute to that $score(q, N) \leq score(q.o)$.
	\hfill $\Box$
\end{proof} 

\begin{algorithm}[H]
	\SetKwFunction{FindLeafCluster}{FindLeafCluster}
	\SetKwFunction{SearchRelaxation}{SearchRelaxation}
	\caption{QDR-Search algorithm}
	\label{alg:qdr_search}
	\KwIn{a query $q$, Top$_\kappa$ results $\kappa$, and a QDR-Tree $T_{qdr}$ }
	\KwOut{Top$_\kappa$($q$)}
	$LC$ = \FindLeafCluster($q$, $T_{qc}$);\\
	\For{i $\leftarrow$ 1 to $|LC|$}
	{
		$q$.bitmap $\leftarrow$ \SearchRelaxation($q$.bitmap, $LC[i]$.bitmap)\\
		Minheap.insert($LC[i]$.root, 0)\\
		\While {Minheap.size() $\neq$ $0$}
		{
			$N$ $\leftarrow$ Minheap.first() \\
			\eIf {$N$ is an object}
			{Top$_\kappa(q)$.insert($N$) \\
				\If{Top$_\kappa(q)$.size() $\geq$ $k$}
				{\bf break}
			}
			{
				\For {$n_i$ $\in$ $N$.entry}
				{
					\If{Number of objects with smaller score than score($q$, $n_i$)  in Minheap  $<$ ($\kappa$ $-$ Top$_\kappa$($q$).size())}
					{Minheap.insert($n_i$, score($q$, $n_i$))\\}
				}
			}
	}}
\end{algorithm}

\section{Experiment Study}\label{performance}

\subsection{Baseline algorithm}\label{baseline}
In this section, we propose three baseline algorithms which are based on the three existing indexes listed in Tab. \ref{tab:1}, including IR-Tree\cite{li2011ir}, SKY R-Tree \cite{sasaki2014sky} and IRS-Tree\cite{6940236}. As is discussed in Sec. \ref{sec-intr}, none of these existing indexes can be qualified for the ASKQ due to different drawbacks. The specific algorithm designs will be respectively explained in detail as follows.

Because the IR-Tree pays no attention on the value of numerical attributes, all spatial objects containing the query keywords and
numerical attributes will be extracted. After that they will be ranked by the comprehensive value of numercial attributes. Eventually, the top-$\kappa$ spatial objects are just the result of the ASKQ.

Different from the IR-Tree, the SKY R-Tree fails to support multi-keywords query because one SKY R-Tree
can only arrange one keyword and its corresponding spatial objects, such as restaurant.
In order to deal with the ASKQ, all of the SKY R-Trees containing the query keywords will be searched and merged to obtain the final top-$\kappa$ results.

The last baseline algorithm is proposed based on the IRS-Tree which is originally intended
to address the GLPQ. 
Unlike ASKQ, the GLPQ requires specific range of attributes to leverage the IRS-Tree.  
To copy with the ASKQ, we will firstly set some different suitable ranges of each attributes as the input, which insures that enough spatial objects can be returned. Afterwards, we will further to select top-$\kappa$ objects from the results in the first stage.
Apparently, in our experiments, the IRS-Tree will not make much sense anymore.

Notice that, all of these three baseline algorithms cannot solve the ASKQ directly at a time and need subsequent elimination of redundancy, which determines their inefficiency in the ASKQ.

In the experiment section, we conduct extensive experiments on both real and synthetic datasets to evaluate
the performance of our proposed algorithms.

\subsection{Experiment Setup}
The real dataset is crawled from the famous location-based service platform, Foursquare.
After information cleaning, the dataset has about 1M objects consisting of geographical location,
the keyword list written in English, and the normalized value of attributes.
Each spatial object contains the keywords such as steak,
pizza, coffee, etc. and four numerical attributes, including price, environment, service and rating.

In the synthetic dataset, each object is composed of coordinates, various keyword, and multi-dimensional
numerical attributes. The size of the synthetic dataset
varies in the experiments.
The
coordinates are randomly generated in (0, 10000.0),
and the average number of keywords per object is decided by a parameter $r$ which denotes the
ratio of the number of object's keywords to the cluster's.
Without loss of generality, the values of each numercial attribute are randomly and independently generated,
following a normal distribution.

\begin{table}
	\caption{Default value of parameters}
	\begin{center}
		\begin{tabular}{ccc}
			\hline
			Parameter & Default value & Descriptions \\
			\hline
			$\kappa$ & 10 & Top-$\kappa$ query \\
			$|o.A|$ & 4 & No. of attributes' dimension \\
			$\delta$ & 0.5 & Weight factor of Eqn.~\eqref{eq:key}\\
			$\alpha$ & 0.5 & Weight factor of Eqn.~\eqref{eq:score}\\
			$\beta$ & 0.67 & Weight factor of Eqn.~\eqref{eq:score}\\
			$\tau_{cluster}$ & 0.3 & Threshold of quad clustering\\
			$\tau_{dup}$ & 0.05 & Threshold of duplication\\
			$|O|$ & 1M & Number of objects \\
			$M$ & 25 & Maximum number of DR-Tree entries \\
			\hline
		\end{tabular}
	\end{center}
	\label{valuetable}
\end{table}

We compare the query cost of proposed algorithms with different datasets respectively.
The experimental settings are given in Table \ref{valuetable}. The default values
are used unless otherwise specified. All algorithms are implemented in Python and run with Intel core i7 6700HQ CPU at 2.60 GHz and 16 GB memory.

\subsection{Performance Evaluation}
In this section, we campare
different baseline algorithms proposed in Section \ref{baseline} with our framework.
We evaluate the processing time and disk I/O 
of all the proposed methods by varying
the parameters in Table \ref{valuetable} and investigate their effects.
In the first part we study the experimental results on the real dataset.

{\bf Index construction cost: }
We first evaluate the construction costs of various methods. The cost of an index is measured
by its construction time and space budget. The costs of various methods are shown in Fig. \ref{fig:indexcost},
where IRS refers to the baseline structure IRS-Tree, the SKY-R, IR refer to SKY R-Tree, 
IR-Tree respectively, and the QDR represents our design. We can see that, SKY R-Tree and IR-Tree exceed both in time consumption and space cost, because they are short of
attention of either attributes or multi-keywords. Moreover, since we employ bitmap
and skyline points to measure numercial attributes, QDR is more lightweight than IRS.

{\bf Effect of $\kappa$: }
We investigate the effect of $\kappa$ on the processing time and disk I/O of the proposed algorithms by randomly
generate 100 queries. Here, considering that the SKY R-Tree and IR-Tree do not take into account either 
attributes or keywords, we add a filter operation after their query process. For example, the SKY R-Tree returns
Top-$\kappa$ results of each keyword and merges them in the second stage. Obviously,
this redundancy of result is the main reason of the high time cost.    
As shown in Fig. \ref{fig:effk}, with the increase of $\kappa$, IRS and QDR have the
same smoothly increasing trend on query time and disk I/O. QDR exceeds in query time cost with different
parameters. It indicates that we can effectively receive the Top-$\kappa$ results from one branch to another.

\begin{figure}[!htbp]
	\begin{minipage}[t]{0.5\linewidth}
		\centering
		\subfigure[Index time]{
			\label{fig:indexcost:a} 
			\includegraphics[width=0.45\textwidth]{./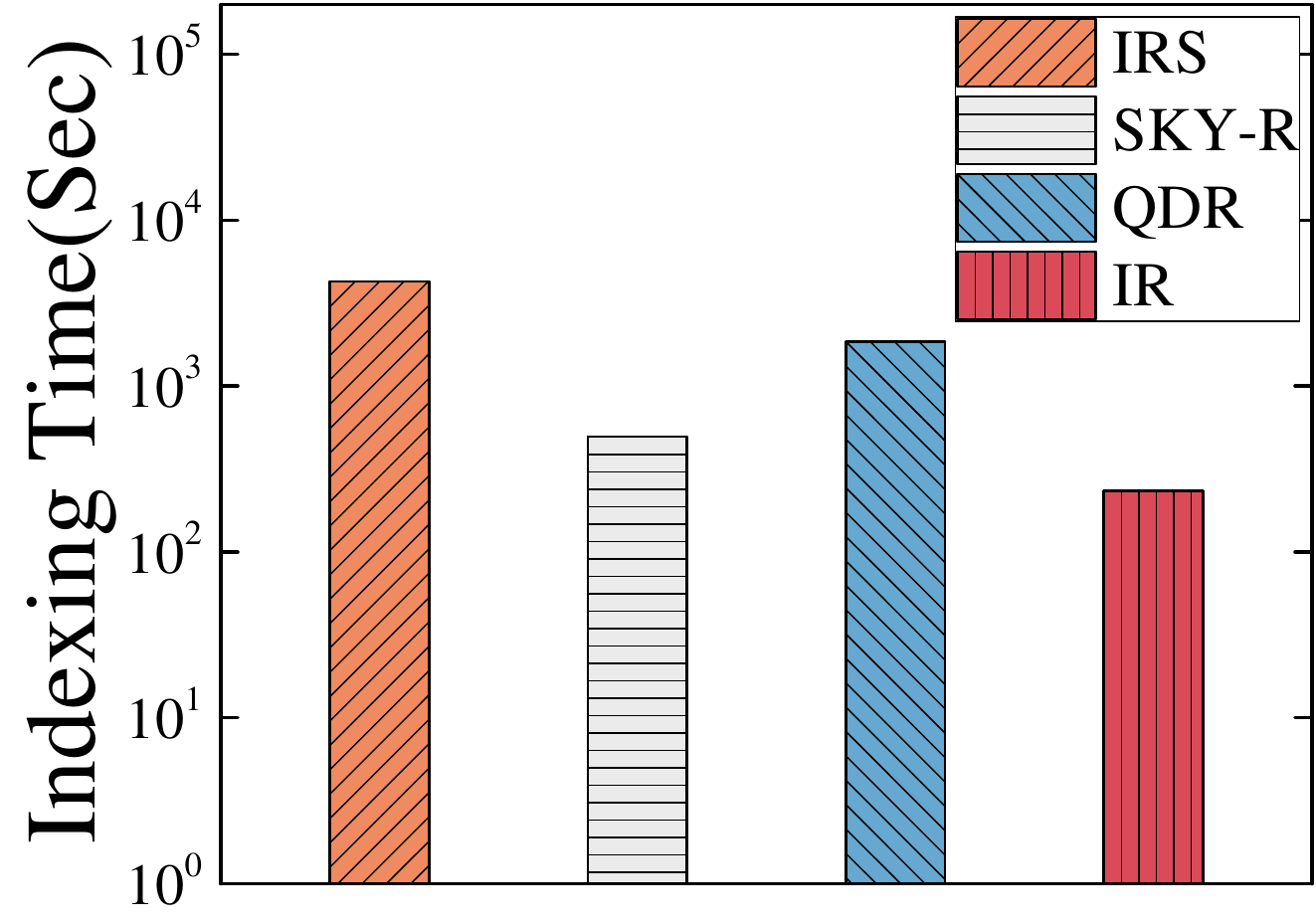}}
		\subfigure[Index size]{
			\label{fig:indexcost:b} 
			\includegraphics[width=0.45\textwidth]{./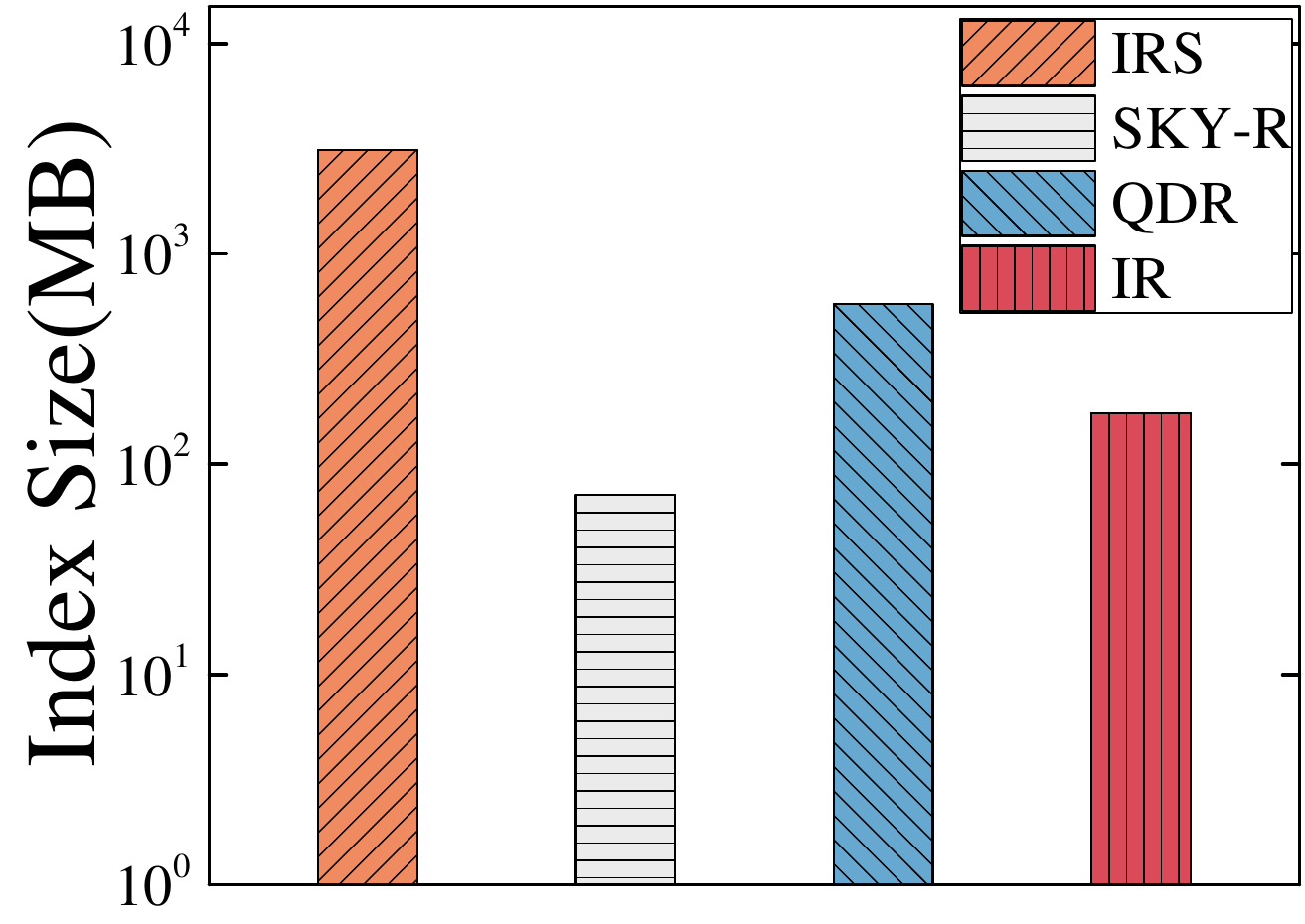}}
		\caption{Construction cost} \label{fig:indexcost}
	\end{minipage}
	\begin{minipage}[t]{0.5\linewidth}
		\centering
		
		\subfigure[Process time]{
			\label{fig:effk:a} 
			\includegraphics[width=0.43\textwidth]{./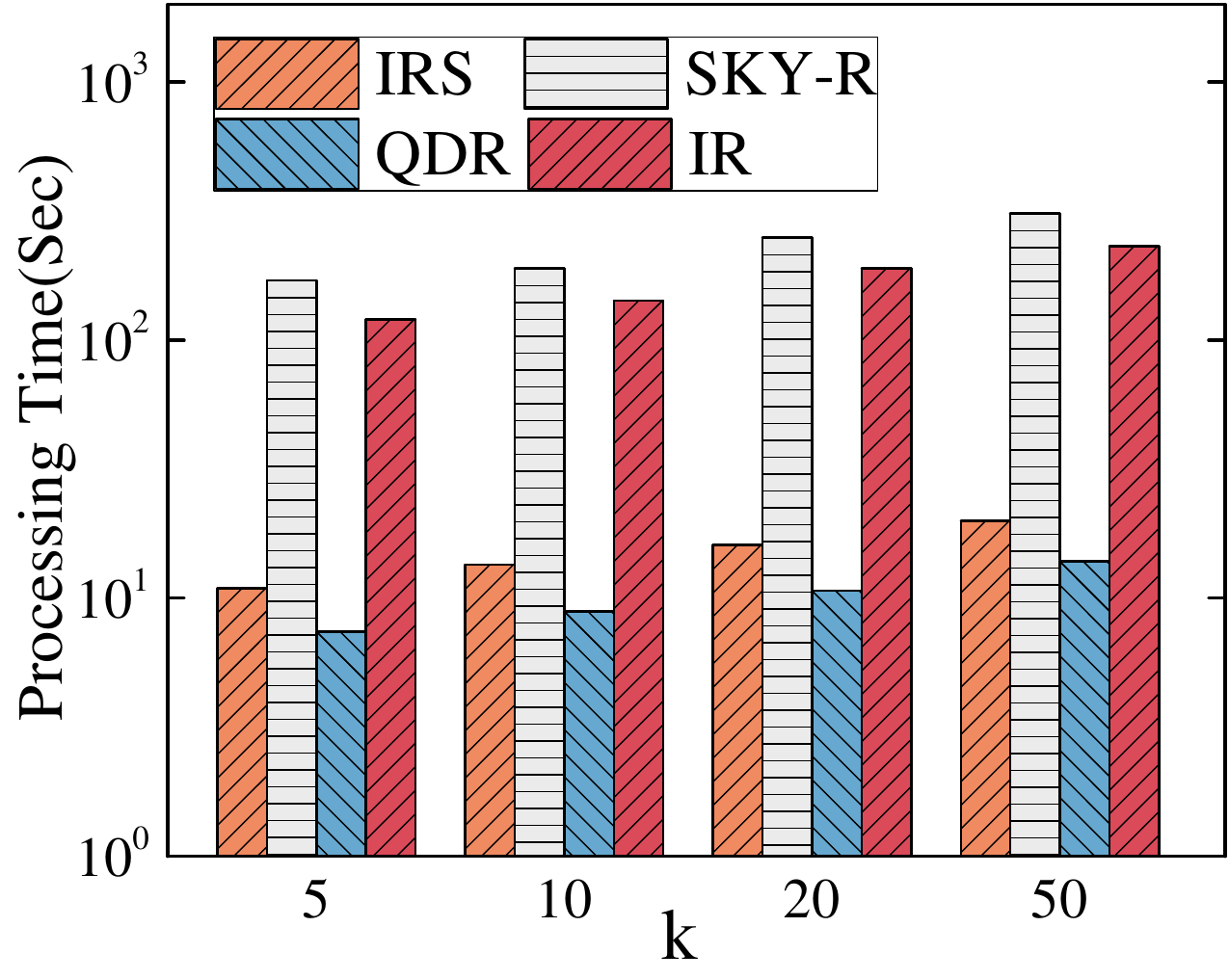}}
		\subfigure[Disk I/O]{
			\label{fig:effk:b} 
			\includegraphics[width=0.43\textwidth]{./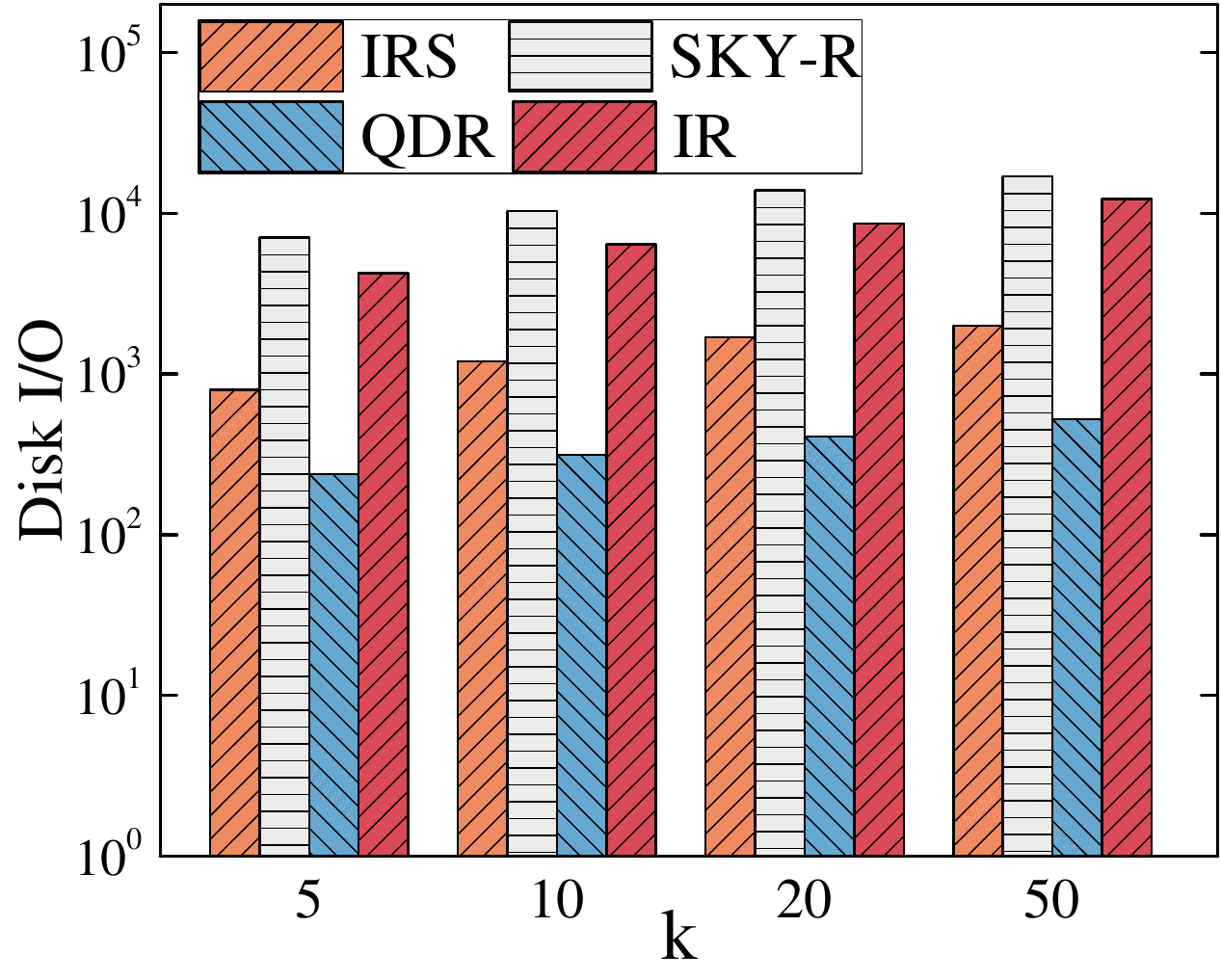}}
		\caption{Effect of $\kappa$}
		\label{fig:effk} 
		
	\end{minipage}
\end{figure}


{\bf Effect of $|O|$: }
Parameter $|O|$ denotes the number of objects in the QDR-Tree, We
increase the number of objects in the synthetic dataset from
0.1 to 5 M. It can be seen from Fig. 7(a)(b), that, SKY-R and IR have more obvious
increases in query time cost and disk I/O when the data size
increases, which can be explained because of their larger redundancy along with the larger dataset.
On the other hand, the QDR is more stable and surpasses another three indexes.

{\bf Effect of $|o.A|$: }
Parameter $|o.A|$ denotes the number of attributes the object $o$ covers. 
As shown in Fig. 7(c)(d),
the query time and Disk I/O of the IR-Tree based on synopses tree has distinct increase, because it fails to consider the attributes,  while another three frameworks
are more stable, which is mainly because of either the skyline filter algorithm or the synopses tree.

{\bf Effect of $\tau_{cluster}$ \& $\tau_{dup}$: } $\tau_{cluster}$ and $\tau_{dup}$ are the crucial parameters in our QDR-Tree, which are analyzed theoretically in Sec. \ref{SCIR}. The experimental results also verify their effect on the Processing Time and Index Size.
Fig. 6(a)(c) show that both $\tau_{cluster}$ and $\tau_{dup}$ have an optimal value to minimize the processing time. Smaller or larger value will both increase the processing time because of keyword scattring or redundancy. In Fig. 6(b)(d), index size decreases as the $\tau_{cluster}$ becomes larger and reach saturation at some point, while it increases along with the $\tau_{dup}$ because of the increase of keyword redundancy. 

\begin{figure}
	\centering
	\subfigure[Efficiency-$|O|$]{
		\label{fig:effoa:a} 
		\includegraphics[width=1.1in]{./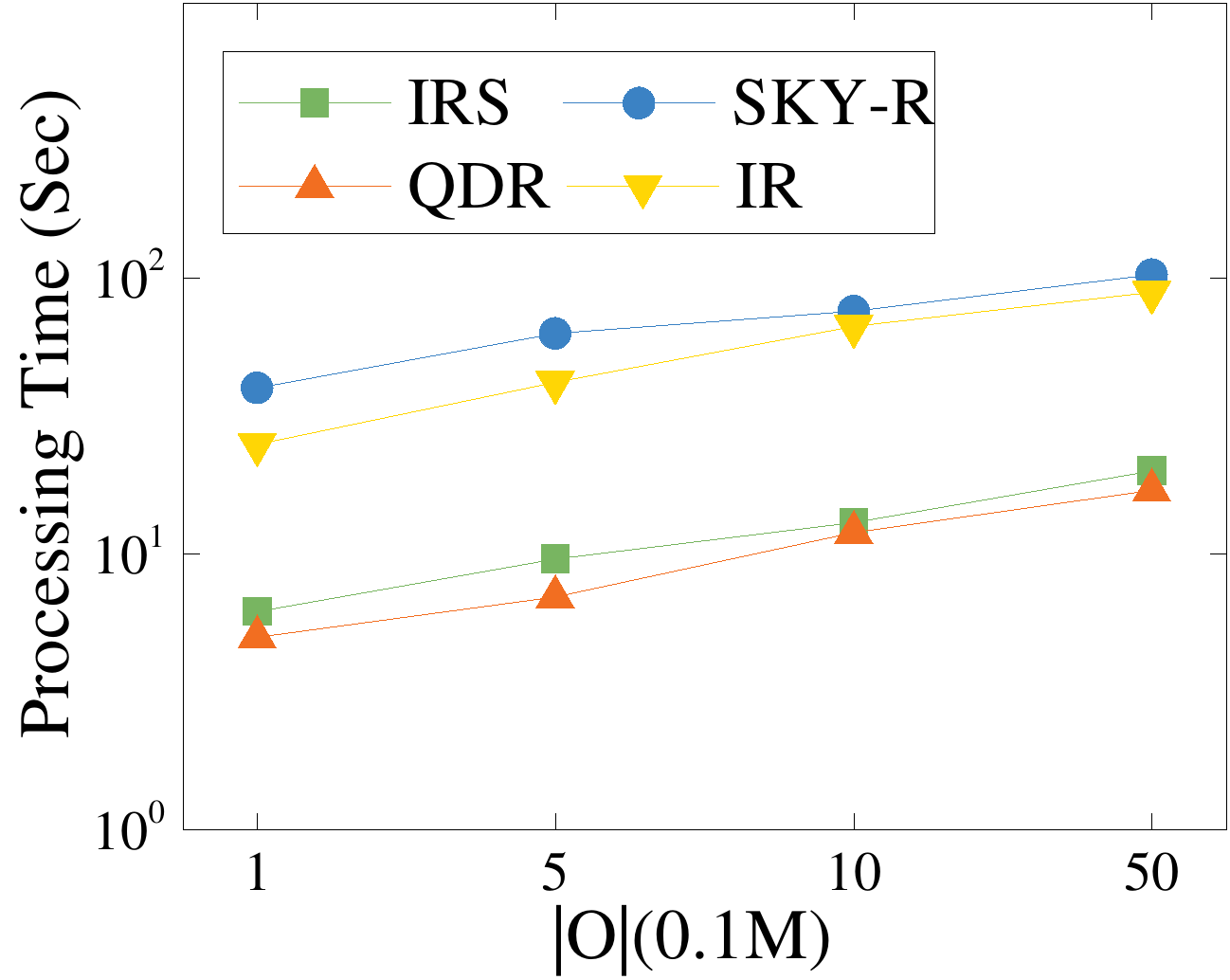}}
	\subfigure[Disk I/O-$|O|$]{
		\label{fig:effoa:b} 
		\includegraphics[width=1.1in]{./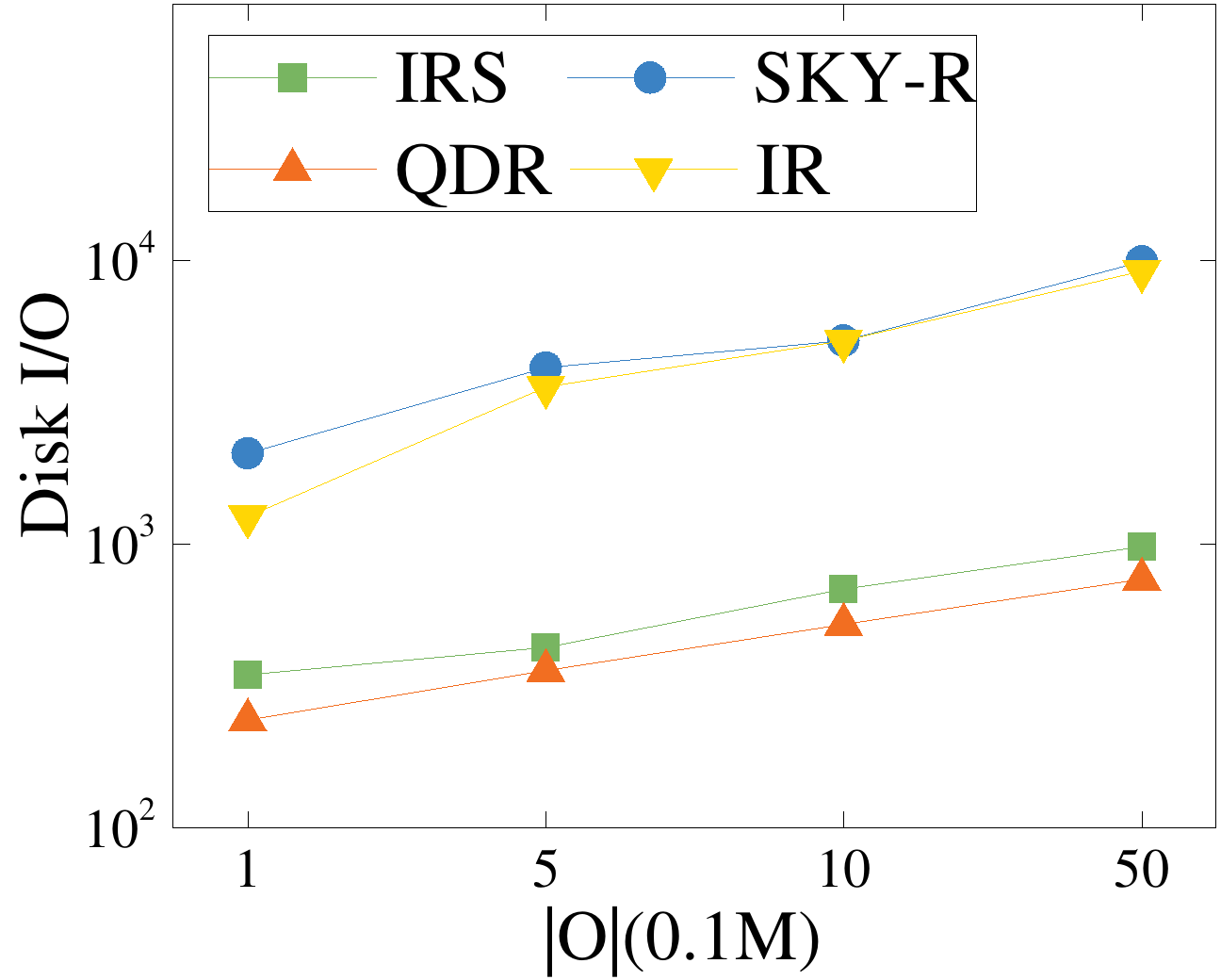}}
	\subfigure[Efficiency-$|o.A|$]{
		\label{fig:effoa:c} 
		\includegraphics[width=1.07in]{./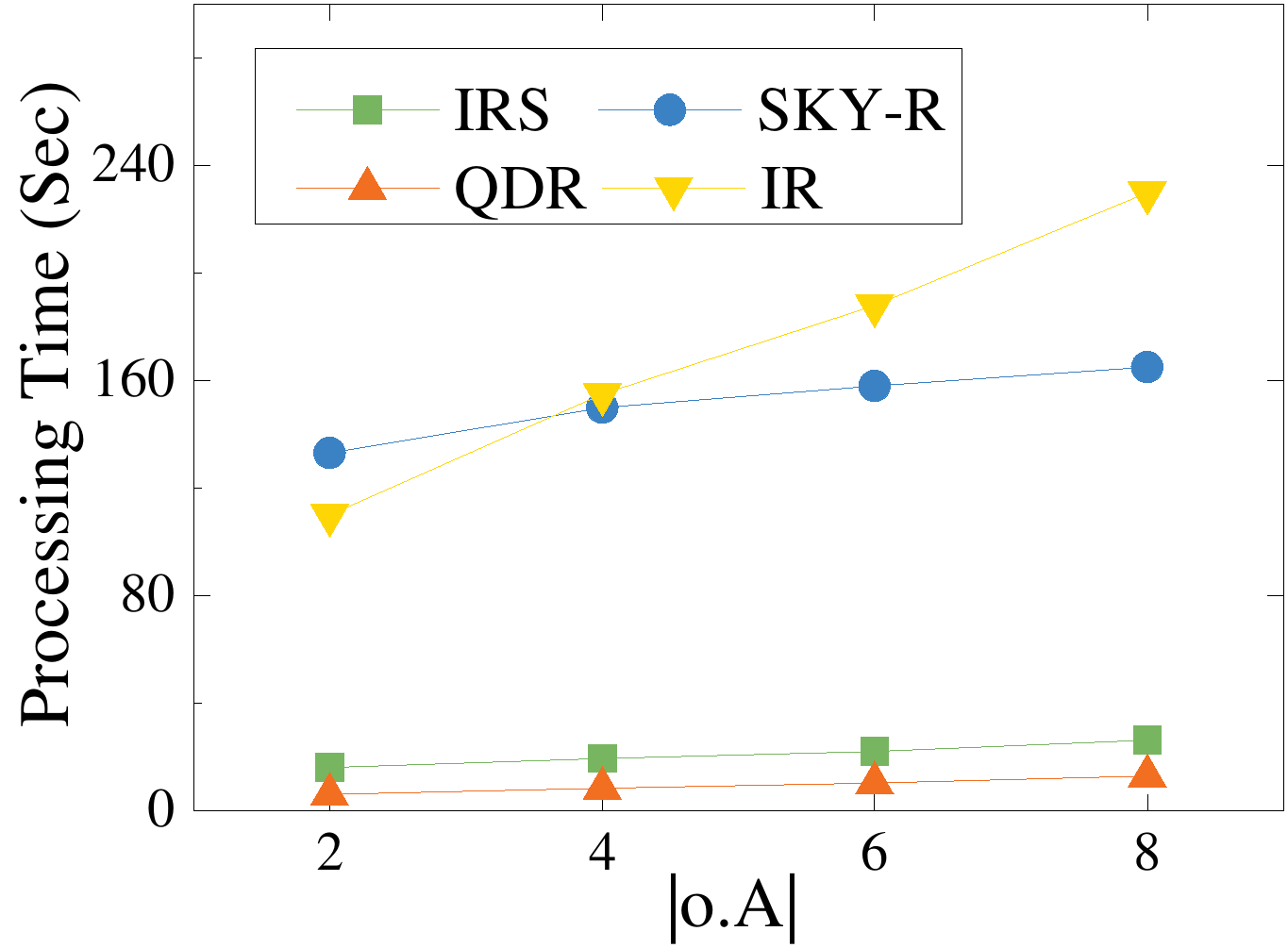}}
	\subfigure[Disk I/O-$|o.A|$]{
		\label{fig:effoa:d} 
		\includegraphics[width=1.07in]{./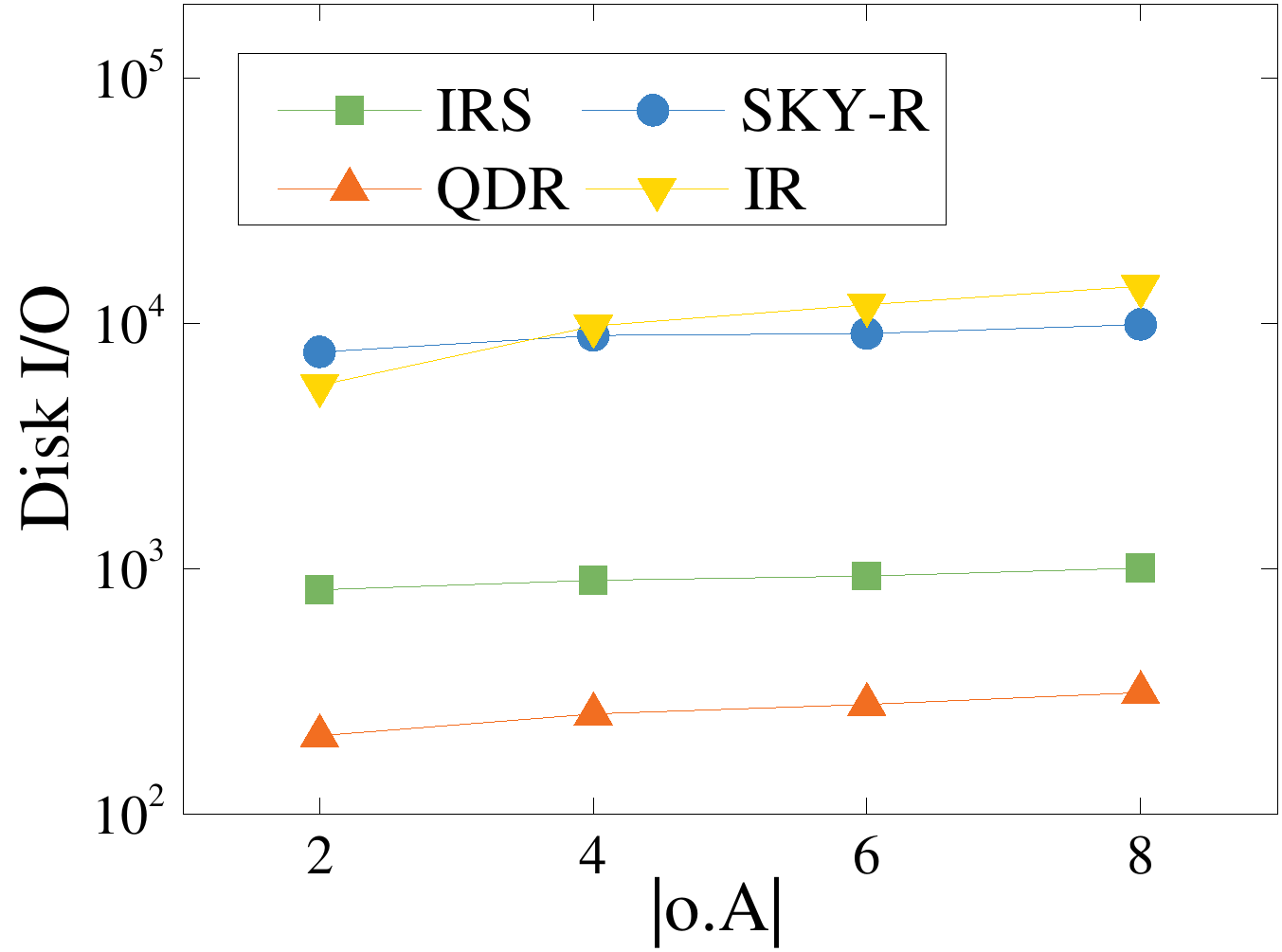}}
	\caption{Synthetic dataset}
	\label{fig:effoa} 
\end{figure}

\begin{figure}
	\centering
	\subfigure[\scriptsize Efficiency-$\tau_{cluster}$]{
		\label{fig:effab:a} 
		\includegraphics[width=1.05in]{./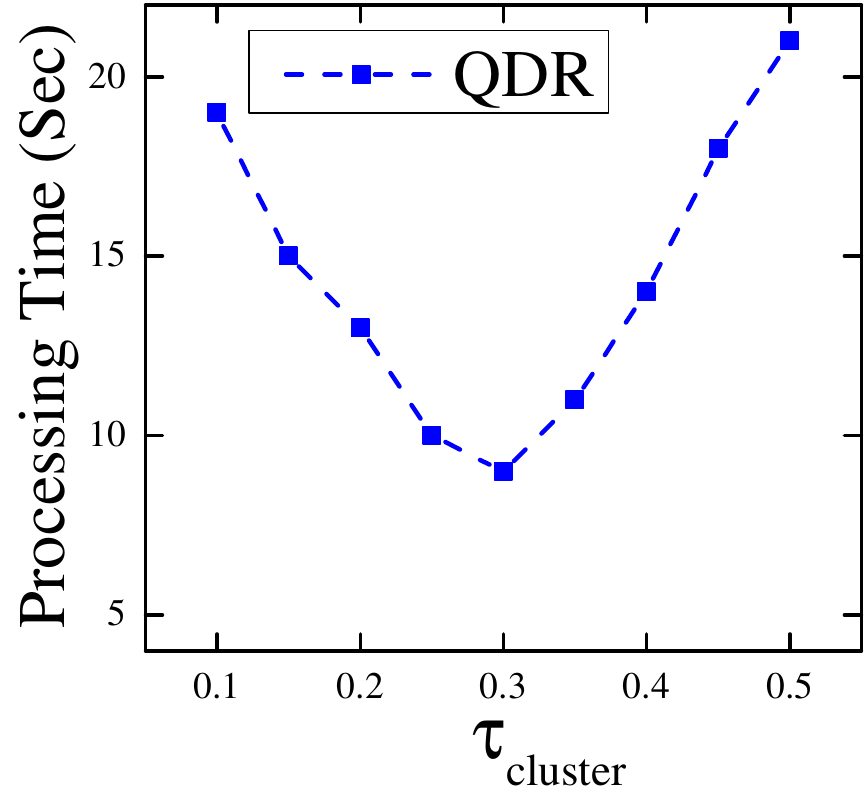}}
	\subfigure[\scriptsize Index Size-$\tau_{cluster}$]{
		\label{fig:effab:b} 
		\includegraphics[width=1.1in]{./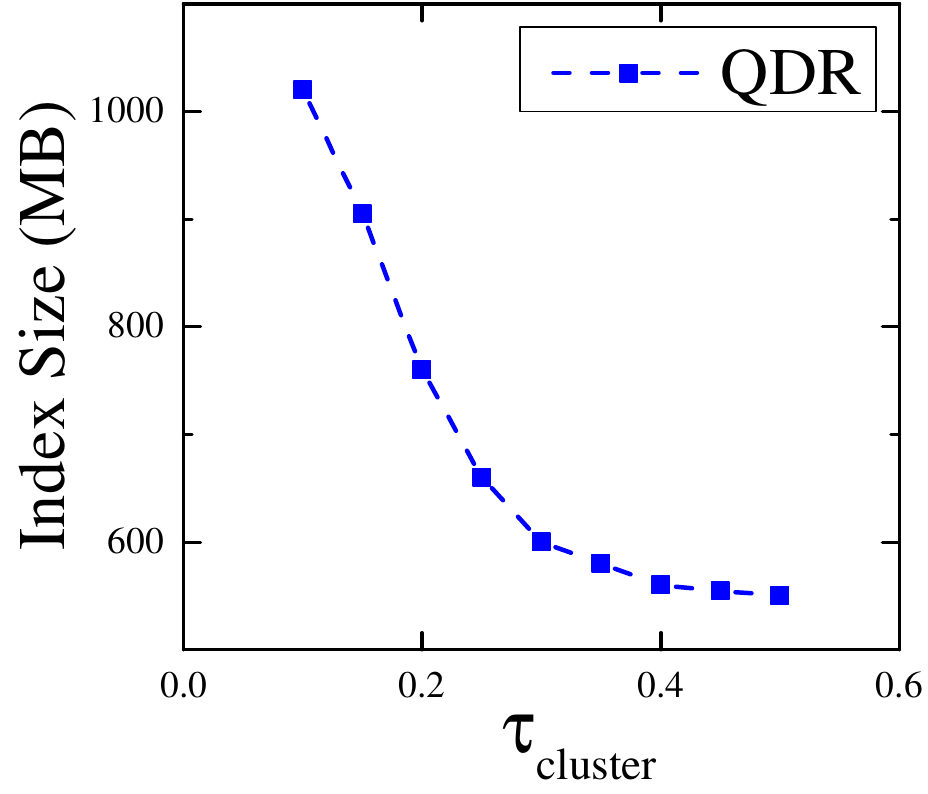}}
	\subfigure[\scriptsize Efficiency-$\tau_{dup}$]{
		\label{fig:effab:c} 
		\includegraphics[width=1.1in]{./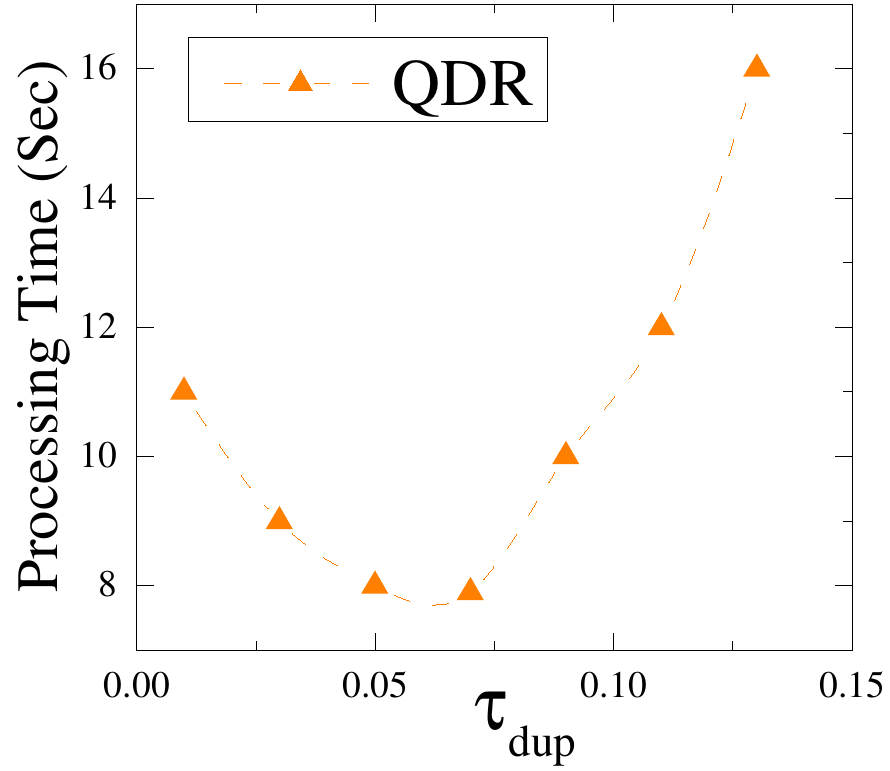}}
	\subfigure[\scriptsize Index Size-$\tau_{dup}$]{
		\label{fig:effab:d} 
		\includegraphics[width=1.1in]{./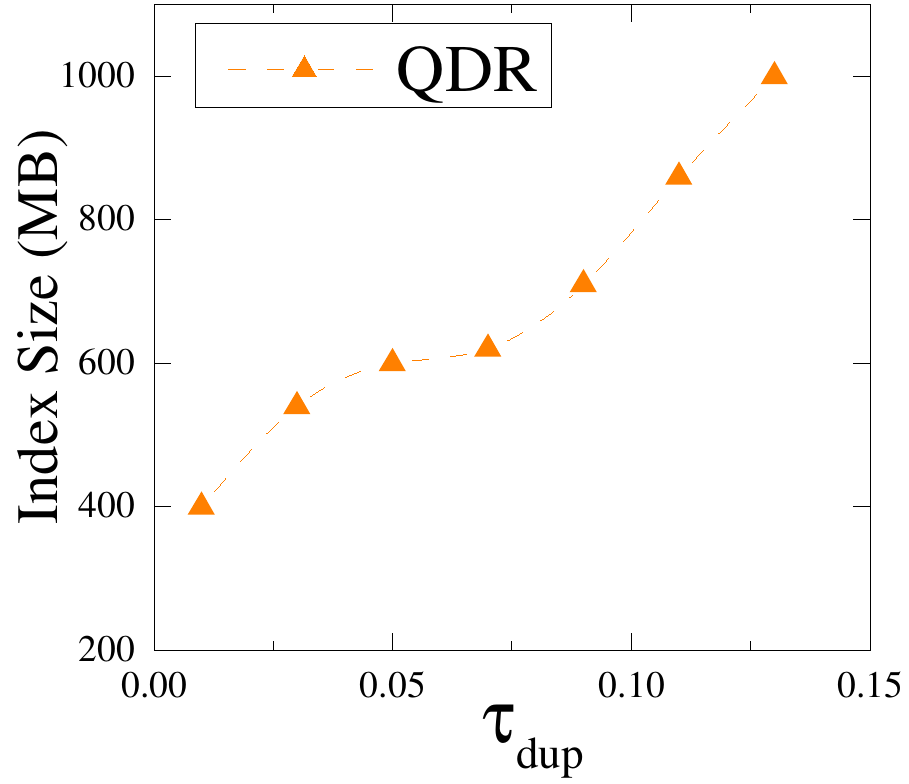}}
	\caption{Effect of $\tau_{cluster}$ \& $\tau_{dup}$}
	\label{fig:effab} 
\end{figure}


\section{Conclusion} \label{conclusion}
In this paper, we formulated the attributes-aware spatial keyword query (ASKQ) and
proposed a novel index structure call Quad-cluster Dual-filtering R-Tree (QDR).
QDR-Tree is a two-layer hybrid index based on two index structures and two searching
algorithms. We also proposed a novel method to measure the relevance of
spatial objects with query keywords, which applies keyword-bitmap and search-relaxation
to achieve exact and similar keyword match. Moreover, by employing the keyword-relaxation,
we greatly improve the time efficiency at the sacrifice of a little space consumption.
Finally, massive experiments
with real datasets demonstrate the efficiency and effectiveness of QDR-Tree.

\bibliographystyle{splncs03}
\bibliography{ref/reference}

\end{document}